\newtheorem{thm}{Theorem}
\newtheorem{lem}[thm]{Lemma}
\newtheorem{dfn}[thm]{Definition}
\def\RR{{\mathbb R}}
\def\Ss{{\mathbb S}}
\def\EE{{\mathbb E}}
\def\Pr{{\mathrm Pr}}
\def\dd{{\mathrm d}}
\def\ee{{\mathrm e}}
\def\MM{{\mathcal M}}
\def\sign{{\mathrm sign}}
\begin{document}
\title{Practical linear-space Approximate Near Neighbors in high dimension}
\author[1,2]{Georgia Avarikioti}
\author[2]{Ioannis Z.~Emiris}
\author[2]{Ioannis Psarros}
\author[2]{Georgios Samaras}
\affil[1]{School of Electrical and Computer Engineering, National Technical University of Athens,\\ Athens, Greece\\
  \texttt{zetavar@hotmail.com}}
\affil[2]{Department of Informatics \& Telecommunications, National Kapodistrian University of Athens,\\ Athens, Greece\\
  \texttt{\{emiris,ipsarros,gsamaras\}@di.uoa.gr}}

\maketitle

\begin{abstract}
The $c$-approximate Near Neighbor problem in high dimensional spaces has been mainly addressed by Locality Sensitive Hashing (LSH), which offers polynomial dependence on the dimension, query time sublinear in the size of the dataset, and subquadratic space requirement. 
For practical applications, linear space is typically imperative.
Most previous work in the linear space regime focuses on the case that $c$ exceeds $1$ by a constant term. In a recently accepted paper, optimal bounds have been achieved for any $c>1$~\cite{ALRW17}.

Towards practicality, we present a new and simple data structure using linear space and sublinear query time for any $c>1$ including $c\to 1^+$. Given an LSH family of functions for some metric space, we randomly project points to the Hamming cube of dimension $\log n$, where $n$ is the number of input points. The projected space contains strings which serve as keys for buckets containing the input points. The query algorithm simply projects the query point, then examines points which are assigned to the same or nearby vertices on the Hamming cube. 
We analyze in detail the query time for some standard LSH families. 

To illustrate our claim of practicality, we offer an open-source implementation in {\tt C++}, and report on several experiments in dimension up to 1000 and $n$ up to $10^6$. 
Our algorithm is one to two orders of magnitude faster than brute force search.
Experiments confirm the sublinear dependence on $n$ and the linear dependence on the dimension.
We have compared against state-of-the-art LSH-based library {\tt FALCONN}: our search is somewhat slower, but memory usage and preprocessing time are significantly smaller.
\\\smallskip

\noindent
{\em Keywords.} Near Neighbor, high dimension, linear storage, sublinear query, random projection, implementation
\end{abstract}

\section{Introduction}

We are interested in the problem of Approximate Nearest Neighbor search 
in Euclidean spaces, when the dimension is high; typically one assumes for dimension $d\gg \log n$, where $n$ denotes the number of input data points. 
Due to known reductions, e.g.~\cite{HIM12}, it is apparent that one may focus on designing an efficient data structure for the Approximate Near Neighbor (ANN) problem instead of directly solving the Approximate Nearest Neighbor problem. 
The former is a decision problem, whose output may contain a witness point (as in Definition~\ref{Dann} below),
whereas the latter is an optimization question.
Here we trade exactness for efficiency in order to tackle the case of general dimension, where dimension is an input parameter.
The $(1+\epsilon,r)$-ANN problem, where $c=1+\epsilon$, is defined as follows.  
\begin{dfn}[Approximate Near Neighbor problem]\label{Dann}
Let $(\MM, \dd_{\MM})$ be a metric space. Given $P\subseteq \MM$, and reals $r>0$ and $\epsilon>0$, build a data structure s.t.\ for any query $q\in \MM$, there is an algorithm performing as follows:
 \begin{itemize}
  \item if $\exists p^{*} \in P$ s.t.\ $\dd_{\MM}(p^{*},q)\leq r$, then 
return any point $p'\in P$ s.t.\ $\dd_{\MM}(p',q)\leq(1+\epsilon)\cdot r$,
  \item if $\forall p \in P$, $\dd_{\MM}(p,q)>(1+\epsilon)\cdot r$, then report ``no''.
 \end{itemize}
\end{dfn}

An important approach for such problems today is Locality Sensitive Hashing (LSH).
It has been designed precisely for problems in general dimension.
The LSH method is based on the idea of using hash functions enhanced with the property that it is more probable to map nearby points to the same buckets.

\begin{dfn}
Let reals $r_1<r_2$ and $p_1>p_2>0$. We call a family $F$ of hash functions $(p_1,p_2,r_1,r_2)$-sensitive for a metric space $\MM$ if, for any $x, y \in \MM$, and $h$ distributed randomly in $F$, it holds:
\begin{itemize}
 \item $\mathrm{d}_{\MM} (x,y) \leq r_1 \implies Pr[h(x)=h(y)]\geq p_1,$
 \item $\mathrm{d}_{\MM} (x,y) \geq r_2 \implies Pr[h(x)=h(y)]\leq p_2.$
\end{itemize}
\end{dfn}

Let us survey previous work, focusing on methods whose complexity has polynomial, often even linear, dependence on the dimension.
LSH was introduced by Indyk and Motwani ~\cite{IM98,HIM12} and yields data structures with query time $O(dn^{\rho})$ and space $O(n^{1+\rho}+dn)$. Since then, the optimal value of the LSH exponent, $\rho$, has been extensively studied for several interesting metrics, such as $\ell_1$ and $\ell_2$. In a series of papers~\cite{IM98,DI04,MNP07,AI08,OWZ11}, it has been established that the optimal value for the Euclidean metric is $\rho={1}/{c^2}\pm o(1)$, whereas in~\cite{IM98, MNP07,OWZ11} that the optimal value for the Hamming distance is $\rho= {1}/{c}\pm o(1)$. 

In contrast with the definition above, which concerns data-independent LSH, quite recently the focus has been shifted to data-dependent LSH. 
In the latter case, the algorithms exploit the fact that every dataset has some structure and consequently this approach yields better bounds for the LSH exponent. Specifically, in~\cite{AINR14,AR15} they showed that $\rho={1}/({2c-1})$ for the $\ell_1$ distance and $\rho={1}/({2c^2-1})$ for the $\ell_2$ distance. Moreover, in~\cite{AR16} they proved that these bounds are tight in the data-dependent case. In a recent paper~\cite{ALRW17},
 the authors have established lower bounds and have also designed LSH-based algorithms to arrive at matching upper bounds in the data-dependent case. These results also yield data independent algorithms.
 
The data-dependent algorithms, though better in theory, are quite challenging in practice. In \cite{AILRS15}, they present an efficient implementation of one part of \cite{AR15}. 
Another attempt towards practicality for a data-dependent algorithm was recently made in~\cite{ARN17}, where they presented a new approach based on LSH forests. Typically though, data-independent algorithms, such as the one proposed in this work, yield better results in practice than data-dependent algorithms.

For practical applications, an equally important parameter is memory usage.
Most of the previous work in the (near) linear space regime focuses on the case that $c$ is greater than $1$ by a constant term.
When $c$ approaches $1$, these methods become trivial in the sense that query time becomes linear in $n$. 
One such LSH-based approach~\cite{Pan06} offers query time proportional to 
$dn^{O(1/{c})}$.
The query time was later improved in~\cite{AI08} to $dn^{O(1/{c^2})}$; both complexities are sublinear in $n$ only for large enough $c>1$.
One remarkable exception is the recently accepted paper~\cite{ALRW17}, where they achieve near-linear space and sublinear query time, even for $c\to 1^+$. Their data-dependent data structure is optimal for a reasonable model of hashing-based data structures. 

Another line of work that achieves linear space and sublinear query time for the Approximate Nearest Neighbor problem is based on random projections to drastically lower-dimensional spaces, where one can simply search using tree-based methods, such as BBD-trees~\cite{AEP15,AEP16}. This method relies on a projection extending the type of projections typically based on the Johnson-Lindenstrauss lemma. The new projection only ensures that an approximate nearest neighbor in the original space can be found among the preimages of $k$ approximate nearest neighbors in the projection. In the final version of this work~\cite{AEP16}, a sublinear-time algorithm is achieved with optimal space usage.

In this paper, a new random projection is again the key step, although here the projection's range are the vertices of the Hamming hypercube of dimension $\log n$. The projected space contains strings which serve as keys for buckets containing the input points. The query algorithm simply projects the query point, then examines points which are assigned to the same or nearby vertices on the Hamming cube. 

The random projection relies on the existence of an LSH family for the input metric. We study standard LSH families for $\ell_2$ and $\ell_1$, for which we achieve query time $O(d n^{1-\delta})$ where $\delta=\Theta(\epsilon^2)$, $\epsilon \in (0,1]$. The constants appearing in $\delta$ vary with the LSH family, but it holds that $\delta>0$ for any $\epsilon>0$. The space and preprocessing time are both linear for constant probability of success, which is important in practical applications.

We illustrate our approach with an open-source implementation, and report on a series of experiments with $n$ up to $10^6$ and $d$ up to 1000, where results are very encouraging. It is evident that our algorithm is 8.5--80 times faster than brute force search, depending on the difficulty of the dataset. Moreover, we handle a real dataset of $10^6$ images represented in 960 dimensions with a query time of less than $128$~msec on average. We test our implementation with synthetic and image datasets, and we compare against the LSH-based {\tt FALCONN} library \cite{AILRS15}. For SIFT, MNIST and GIST image datasets, we typically achieve 2.2 times less memory consumption and 14 times faster construction time, while {\tt FALCONN} is~1.15 times faster in query time.

The rest of the paper is structured as follows.
The next section states our main complexity results for the $(c,r)$-ANN problem in the Euclidean and Manhattan metrics. 
In Section~\ref{sec:impl}, we discuss our implementation, whereas in
Section~\ref{Sie}, we present our experimental results. 
We conclude with open questions.

\section{Data structures}\label{Sds}

This section introduces our main data structure, and the corresponding algorithmic tools. 

The algorithmic idea is to apply a random projection from any LSH-able metric to the Hamming hypercube.
Given an LSH family of functions for some metric space, we randomly project points to the Hamming cube of dimension $<=\log n$, where $n$ is the number of input points. Hence, we obtain binary strings serving as keys for buckets containing the input points. The query algorithm projects a given point, and tests points assigned to the same or nearby vertices on the hypercube. 

We start our presentation with an algorithmic idea applicable to any metric admitting an LSH-based construction, aka LSH-able metric. Then, we study some classical LSH families which are also simple to implement.

It is observed in~\cite{Cha02} that, for any similarity function that admits an LSH function family, the corresponding distance function is isometrically embeddable in the Hamming hypercube, i.e.\ the hypercube whose vertices have 0/1 coordinates. The dimension cannot be controlled, and depends on the similarity function. 
Our approach is to specify a random projection from any space endowed with an LSH-able metric, to the Hamming hypercube. Random projections which map points from a high dimensional Hamming space to lower dimensional Hamming space have been already used in the ANN context~\cite{KOR00}.
We start with an ANN data structure whose complexity and performance depends on the LSH family that we assume is available.

\begin{lem}\label{lem:DS}
Given a $(p_1,p_2,r,cr)$-sensitive hash family $F$ for some metric $(\MM,\dd_{\MM})$ and input dataset $P\subseteq \mathcal{M}$, there exists a data structure for the $(c,r)$-ANN problem with space $O(dn)$, time preprocessing $O(d n)$, and query time $O(dn^{1-\delta}+n^{H((1-p_1)/2)})$, where 
$$
\delta=\delta(p_1,p_2)=\frac{(p_1-p_2)^2}{(1-p_2)}\cdot \frac{\log \ee}{4},
$$
where $\ee$ denotes the basis of natural logarithms, and $H(\cdot)$ is the binary entropy function.
The bounds hold assuming that computing $\dd_{\MM}(.)$ and computing the hash function cost $O(d)$. Given some query $q\in \MM$, the building process succeeds with constant probability.
\end{lem}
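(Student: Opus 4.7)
The plan is to define a concrete random projection $\phi\colon\MM\to\{0,1\}^K$ with $K=\lceil\log n\rceil$ and then bound the query cost by a Chernoff-type argument on the projected Hamming distance. First I would pick $K$ independent hash functions $h_1,\ldots,h_K$ from the given $(p_1,p_2,r,cr)$-sensitive family $F$ and compose each $h_i$ with an independent uniformly random binarization $g_i$ of its range, i.e.\ a map sending every element of the range to $0$ or $1$ independently with probability $1/2$ (implementable, for example, by a pairwise-independent hash on the values actually encountered). Setting $\phi_i(x)=g_i(h_i(x))$ and conditioning on whether $h_i(p)=h_i(q)$ yields, at each coordinate,
\begin{equation*}
\Pr[\phi_i(p)\neq\phi_i(q)] = \tfrac{1}{2}\Pr[h_i(p)\neq h_i(q)],
\end{equation*}
so the disagreement probability is at most $(1-p_1)/2$ when $\dd_{\MM}(p,q)\le r$ and at least $(1-p_2)/2$ when $\dd_{\MM}(p,q)\ge cr$. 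In preprocessing I compute $\phi(p)$ for each $p\in P$ and store $p$ in a hash table keyed by $\phi(p)$, matching the stated $O(dn)$ space and preprocessing bounds.

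At query time I compute $\phi(q)$ and enumerate all vertices $v\in\{0,1\}^K$ in the Hamming ball of radius $t:=\lceil K(1-p_1)/2\rceil$ (plus an additive $O(\sqrt{K})$ slack needed to make the near-neighbor retrieval event constant probability), fetch the points assigned to those buckets, and return the first one within distance $cr$ of $q$. The number of vertices enumerated is $\sum_{i\le t}\binom{K}{i}\le 2^{KH(t/K)}=n^{H((1-p_1)/2)+o(1)}$, which is exactly the second additive term in the claimed query time.

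The core of the analysis has two pieces. For a true near neighbor $p^{*}$ with $\dd_{\MM}(p^{*},q)\le r$ we have $\EE[\|\phi(p^{*})-\phi(q)\|_H]\le K(1-p_1)/2$, and Hoeffding with the slack above yields $\|\phi(p^{*})-\phi(q)\|_H\le t$ with constant probability, so $p^{*}$ is examined. For a far point $p$ with $\dd_{\MM}(p,q)\ge cr$, the key estimate is a multiplicative Chernoff bound on the probability that the sum of disagreement indicators falls from its mean $\mu\ge K(1-p_2)/2$ all the way down to $t=K(1-p_1)/2$; with relative deviation $\eta=(p_1-p_2)/(1-p_2)$ this gives
\begin{equation*}
\Pr[\|\phi(p)-\phi(q)\|_H\le t]\le \exp\!\Bigl(-\tfrac{\eta^{2}\mu}{2}\Bigr)\le \exp\!\Bigl(-\tfrac{K(p_1-p_2)^{2}}{4(1-p_2)}\Bigr) = n^{-\delta},
\end{equation*}
where the last identity uses $K=\log_2 n$ and the change-of-base factor $\log\ee$. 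The expected number of far collisions is therefore at most $n\cdot n^{-\delta}=n^{1-\delta}$, each costing $O(d)$ to reject; Markov converts this to a constant-probability bound, and a union bound over the two failure events gives overall constant success probability. The main obstacle is calibrating $t$ so that both events (finding $p^{*}$, and few far collisions) are simultaneously constant probability with the same threshold, and extracting the precise $1/(1-p_2)$ factor in $\delta$: a naive Hoeffding bound on the disagreement count only yields $\exp(-K(p_1-p_2)^2/2)$, which is strictly weaker, so one must apply the multiplicative Chernoff or KL-divergence form to Bernoullis with mean $(1-p_2)/2$ and carefully track the base-$2$ versus base-$\ee$ logarithm conversion that produces the $\log\ee$ factor in the exponent.
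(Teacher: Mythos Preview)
Your proposal is correct and follows the paper's proof essentially line for line: the same random binarized-LSH projection to $\{0,1\}^{\log n}$, the same Hamming-ball enumeration producing the $n^{H((1-p_1)/2)}$ term, and the same multiplicative Chernoff bound on far points yielding the $n^{-\delta}$ collision probability (the paper even flags, as you do, that the additive Hoeffding form would lose the $1/(1-p_2)$ factor). The one small difference is in the near-neighbor step: where you add an $O(\sqrt{K})$ slack to the search radius and appeal to Hoeffding, the paper simply notes that the near neighbor's projected Hamming distance is binomial with mean $\mu\le K(1-p_1)/2$ and invokes the median-of-binomial property to get $\Pr[\text{distance}\le K(1-p_1)/2]\ge 1/2$ directly, which spares you the slack and the attendant $o(1)$ drift in both exponents.
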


\begin{proof}
The first step is a random projection to the Hamming space of dimension $d'$ for $d'$ to be specified later. 
We first sample $h_1\in F$. 
We denote by $h_1(P)$ the image of $h_1$ under $P$. 
Now for each element $x\in h_1(P)$, with probability $1/2$,  set $f_1(x)=0$, otherwise set $f_1(x)=1$. 
This is repeated $d'$ times, and eventually for $p\in \MM$ we compute $f(p)=(f_1(h_1(p)),\ldots,f_{d'}(h_{d'}(p)))$.
Now, observe that 
$$
\dd_{\mathcal{M}}(p,q)\leq r \implies \EE  [\|f_i(h_i(p))-f_i(h_i(q))\|_1]\leq 0.5 (1-p_1)\implies  \EE  [\|f(p)-f(q)\|_1]\leq 0.5\cdot d' \cdot (1-p_1),
$$
$$
\dd_{\mathcal{M}}(p,q)\geq cr \implies \EE  [\|f_i(h_i(p)-f_i(h_i(q))\|_1]\geq 0.5 (1-p_2) \implies  \EE  [\|f(p)-f(q)\|_1]\geq 0.5\cdot d'\cdot (1-p_2).
$$
We distinguish two cases.

First, consider the case $\dd_{\mathcal{M}}(p,q)\leq r$. Let $\mu=\EE[\|f(p)-f(q)\|_1]$. {Then},
$$
\Pr[\|f(p)-f(q)\|_1\geq \ \mu]\leq \frac{1}{2},
$$
since $\|f(p)-f(q)\|_1$ follows the binomial distribution. 

Second, consider the case 
$\dd_{\mathcal{M}}(p,q)\geq cr$. By typical Chernoff bounds,
$$
Pr[\|f(p)-f(q)\|_1\leq \frac{1-p_1}{1-p_2} \cdot \mu]\leq exp(-0.5 \cdot \mu \cdot (p_1-p_2)^2/(1-p_2)^2)\leq exp(- d' \cdot (p_1-p_2)^2/4(1-p_2)).
$$

After mapping the query $q\in \mathcal{M}$ to $f(q)$ in the $d'$-dimensional hamming space we search for all ``near'' hamming vectors $f(p)$ s.t. $\|f(p)-f(q)\|_1\leq 0.5 \cdot d' \cdot (1-p_1)$. This search costs 
$$
\binom{d'}{1}+\binom{d'}{2}+\cdots+\binom{d'}{\lfloor d' \cdot (1-p_1)/2 \rfloor}\leq O(d'\cdot 2^{d' \cdot H((1-p_1)/2)}),
$$ 
where $H(\cdot)$ is the binary entropy function. The inequality is obtained from standard bounds on binomial coefficients (see e.g. \cite{MU05}).
Now, the expected number of points $p\in P$ for which $\dd_{\mathcal{M}}(p,q)\geq cr$ but are mapped "near" $q$ is $\leq n\cdot exp(- d' \cdot (p_1-p_2)^2/4(1-p_2)))$. If we set 
$
d'= \log n,
$
we obtain expected query time 
$$
O(n^{H((1-p_1)/2))}+d  n^{1-\delta}),
$$
where 
$$
\delta=\frac{(p_1-p_2)^2}{(1-p_2)}\cdot \frac{\log \ee}{4}.
$$
If we stop searching after having seen, say $10 n^{1-\delta}$ points for which $\dd_{\mathcal{M}}(p,q)\geq cr$ , then we obtain the same time with constant probability of success. Notice that "success" translates to successful preprocessing for a fixed query $q \in \MM$. The space required is $O(dn)$. 
\end{proof}

The lower bound on $\delta$ is not tight. Above, we have used simplified Chernoff bounds to keep our exposition simple. 
\paragraph*{Discussion on parameters.} We set the dimension $d'=\log n$, since it  minimizes the expected number of candidates
under the linear space restriction. We note that it is possible to set $d'<\log n$ and still have sublinear query time. This choice of $d'$ is interesting in practical applications since it improves space requirement.  
\subsection{The $\ell_2$ case}

In this subsection, we consider the $(c,r)$-ANN problem when the dataset consists of $n$ points $P\subset \RR^d$, the query is $q\in \RR^d$, and the distance is the Euclidean metric. 

We may assume, without loss of generality, that $r=1$, since we can uniformly {scale $(\RR^d,\|\cdot\|_2)$}.
We shall consider two LSH families, for which we obtain slightly different results. The first is based on projecting points to random lines, 
and it is the algorithm used in our implementation, which is discussed in Section~\ref{sec:impl}. The second LSH family relies on reducing the Euclidean problem to points on the sphere, and then partitioning the sphere with random hyperplanes.

\subsubsection{Project on random lines}\label{SSSlines}

Let $p$, $q$ two points in ${\RR}^d$ and $\eta$ the distance between them. Let $w>0$ be a real parameter, and
let $t$ be a random number distributed uniformly in the interval $[0, w]$.
In \cite{DI04}, they present the following LSH family. For $p \in {\RR}^d$, consider the random function 
$$
h(p)=\left \lfloor \frac{\langle p, v\rangle+t}{w} \right \rfloor ,\quad p,v\in\RR^d,
$$
where $v$ is a vector randomly distributed with the $d$-dimensional normal distribution. For this LSH family, the probability of collusion is 
$$
\alpha({\eta,w})= \int_{t=0}^w \frac{2}{\sqrt{2\pi}\eta}\exp{\Big(-\frac{t^2}{2{\eta}^2}\Big)}\Big(1-\frac{t}{w}\Big)dt. $$

\begin{lem}\label{lem:lines}
Given a set of $n$ points $P\subseteq \RR^d$, there exists a data structure for the $(c,r)$-ANN problem under the Euclidean metric, requiring space $O(dn)$, time preprocessing {$O(dn)$}, and query time $O(dn^{1-\delta}+n^{0.9})$, where 
$$
\delta \geq 0.03 \, {(c-1)}^2.
$$
Given some query point $q\in \RR^d$, the building process succeeds with constant probability.
\end{lem}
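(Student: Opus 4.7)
The plan is to invoke Lemma~\ref{lem:DS} with the Datar--Indyk family described above. After rescaling to $r=1$, this family is $(p_1, p_2, 1, c)$-sensitive with $p_1 = \alpha(1, w)$ and $p_2 = \alpha(c, w)$, where $w>0$ is a free parameter. Once $w$ is fixed, the space and preprocessing bounds of Lemma~\ref{lem:DS} follow verbatim; it remains to verify that a single choice of $w$ makes both (i) $\delta(p_1, p_2) = \frac{(p_1-p_2)^2}{1-p_2}\cdot \frac{\log \ee}{4} \geq 0.03\,(c-1)^2$, and (ii) $H((1-p_1)/2) \leq 0.9$, the latter being exactly what is needed for the $n^{0.9}$ term in the query time.

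The function $\alpha(\eta, w)$ is smooth and strictly decreasing in $\eta$, so for $c$ near $1$ a first-order Taylor expansion gives $p_1 - p_2 = -(c-1)\,\partial_\eta \alpha(1, w) + O((c-1)^2)$. Differentiating under the integral sign (the boundary term at $t=w$ vanishes thanks to the factor $1-t/w$) yields the clean formula
\begin{equation*}
\partial_\eta \alpha(\eta, w) = -\frac{2}{w\sqrt{2\pi}}\bigl(1 - e^{-w^2/(2\eta^2)}\bigr),
\end{equation*}
which is strictly negative. Hence $(p_1-p_2)^2/(1-p_2) = \Omega((c-1)^2)$ with an explicit, strictly positive, $w$-dependent constant.

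Condition (ii) is handled by the same window choice: $\alpha(1, w)$ is continuous and increases to $1$ as $w \to \infty$, so $w$ can be picked large enough to push $p_1$ above the threshold $\approx 0.368$ at which $H((1-p_1)/2)$ equals $0.9$. A concrete choice such as $w = 4$ (as is standard in the literature) already gives $p_1 \approx 0.80$ and $|\partial_\eta \alpha(1, 4)| \approx 0.20$, so a direct substitution yields a leading-order value $\delta \approx 0.07\,(c-1)^2$, comfortably above $0.03\,(c-1)^2$ with enough slack to absorb the $O((c-1)^2)$ Taylor remainder.

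The main obstacle is the quantitative nature of the constant $0.03$: one must check the inequality holds uniformly in $c$, not merely asymptotically. Since $\delta$ itself is bounded above by $\log\ee/4$, the stated bound is meaningful only for $c-1$ of bounded size (consistent with the paper's convention $\epsilon \in (0,1]$). For $c-1$ of order one the collision gap $p_1-p_2$ is bounded away from zero by strict monotonicity of $\alpha(\cdot,w)$ on $[1,c]$ and the inequality is easy; the delicate regime is $c\to 1^+$, already handled by the Taylor expansion above. The query time decomposition $O(dn^{1-\delta} + n^{0.9})$ is then inherited directly from Lemma~\ref{lem:DS}.
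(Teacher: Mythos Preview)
Your outline is sound and lands the lemma, but it proceeds differently from the paper in two respects worth noting.

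\textbf{Choice of window.} You fix $w$ (e.g.\ $w=4$), whereas the paper sets $w=c$. The paper's choice has the neat consequence that $p_2=\alpha(c,c)=\alpha(1,1)$ is an absolute constant (by the scale-invariance of $\alpha$), so only $p_1=\alpha(1,c)$ varies with $c$. This immediately gives $1-p_2<0.64$ once and for all and reduces everything to the single inequality $p_1-p_2>\tfrac{5}{21}(c-1)$ on $(1,2]$, which the paper proves by showing $g(c)=p_1-p_2-\tfrac{5}{21}(c-1)$ is concave with $g(1)=0$ and $g(2)>0$. Your fixed $w$ is closer to what one implements and, as a bonus, makes $H((1-p_1)/2)\approx 0.47$ rather than $\approx 0.9$; but it leaves both $p_1-p_2$ and $1-p_2$ moving with $c$.

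\textbf{Uniformity of the constant.} Your derivative formula $\partial_\eta\alpha(\eta,w)=-\tfrac{2}{w\sqrt{2\pi}}\bigl(1-e^{-w^2/(2\eta^2)}\bigr)$ is correct, and the Taylor step legitimately gives $\delta\sim 0.07\,(c-1)^2$ as $c\to 1^+$. The paper, however, does not argue asymptotically: its concavity trick delivers the bound uniformly on all of $(1,2]$ in one stroke. Your sentence ``for $c-1$ of order one the inequality is easy'' is precisely the place where the paper does actual work; to make your route fully rigorous at the stated constant $0.03$ you would still need either a numerical sweep over a compact interval or an analogous convexity/monotonicity argument. As written, your proof nails the delicate regime $c\to1^+$ but leaves the global verification as a promise.
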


\begin{proof}
In the sequel we use the standard Gauss error function, denoted by $erf(\cdot)$.
For probabilities $p_1,p_2$, it holds that
$$
p_1=\alpha({1,w})= \int_{t=0}^w \frac{2}{\sqrt{2\pi}}\exp{\Big(-\frac{t^2}{2}\Big)}\Big(1-\frac{t}{w}\Big)dt =
erf\Big(\frac{w}{\sqrt{2}}\Big)-\sqrt{\frac{2}{\pi}}\frac{1}{w}\Big(1-\exp\big(-\frac{w^2}{2}\big)\Big),
$$
and also that
$$
p_2=\alpha({c,w})= \int_{t=0}^w \frac{2}{\sqrt{2\pi}c}\exp{\Big(-\frac{t^2}{2{c}^2}\Big)}\Big(1-\frac{t}{w}\Big)dt =
$$ 
$$
= erf\Big(\frac{w}{\sqrt{2}c}\Big)-\sqrt{\frac{2}{\pi}}\frac{c}{w}\Big(1-\exp\big(-\frac{w^2}{2{c}^2}\big)\Big).
$$
The LSH scheme is parameterized by $w$.
One possible value is $w=3$, as we have checked on a computer algebra system.
On the other hand, $w=c$ gives similar results, and they are simpler to obtain.
In particular, we have 
$$
p_1-p_2 = erf\Big(\frac{c}{\sqrt{2}}\Big) -\sqrt{\frac{2}{\pi}}\frac{1}{c}\Big(1-\exp\big(-\frac{c^2}{2}\big)\Big)- erf\Big(\frac{1}{\sqrt{2}}\Big)+\sqrt{\frac{2}{\pi}}\Big(1-\exp\big(-\frac{1}{2}\big)\Big).
$$
We shall prove that, given $w=c$, for $c \in (1,2]$, it holds that $p_1-p_2>\frac{5(c-1)}{21}$.
Let us define 
$$
g(c)=p_1-p_2-\frac{5(c-1)}{21} = erf\Big(\frac{c}{\sqrt{2}}\Big) -\sqrt{\frac{2}{\pi}}\frac{1}{c}\Big(1-\exp\big(-\frac{c^2}{2}\big)\Big)-
$$
$$
-erf\Big(\frac{1}{\sqrt{2}}\Big)+\sqrt{\frac{2}{\pi}}\Big(1-\exp\big(-\frac{1}{2}\big)\Big)- \frac{5(c-1)}{21},
\quad c\in (1,2].
$$
Using elementary calculus, it is easy to show that $g(c)$ is concave over $c\in (1,2]$.
Also, $g(1)=0$ and $g(2)>0$, thus $\forall c \in(1,2], \enspace g(c)>0$ and consequently $p_1-p_2>\frac{5(c-1)}{21}$.
In addition, $w=c$ implies $1-p_2 = 1-erf\big(\frac{1}{\sqrt{2}}\big)+\sqrt{\frac{2}{\pi}}\big(1-\exp(-\frac{1}{2})\big) < 0.64$,
and $H\big(\frac{1-p_1}{2}\big)<{0.9}$.
Hence, for $w=c$ and $c \in(1,2]$, $\delta > 0.03{(c-1)}^2$.
\end{proof}


\subsubsection{Hyperplane LSH}

This section reduces the Euclidean ANN to an instance of ANN for which the points lie on a unit sphere. The latter admits an LSH scheme based on partitioning the space by randomly selected halfspaces.

In Euclidean space $\RR^d$, let us assume that the dimension is $d=O(\log n \cdot \log \log n)$, since one can project points \`{a} la Johnson Lindenstrauss \cite{DG02}, and preserve pairwise distances up to multiplicative factors of $1\pm o(1)$. Then, we partition $\RR^d$ using a randomly shifted grid, with cell edge of length 
$O{(\sqrt{d})}=O{((\log n \cdot \log \log n)^{1/2})}$.
Any two points $p,q\in \RR^d$ for which 
$\|p-q\|_2\leq 1$ lie in the same cell with constant probability. 
Let us focus on the set of points lying inside one cell. This set of points has diameter bounded by $O{((\log n \cdot \log \log n)^{1/2})}$. Now, a reduction of \cite{Val15}, reduces the problem to an instance of ANN for which all points lie on a unit sphere $\Ss^{d-1}$, and the search radius is roughly $r'=\Theta((\log n\cdot \log \log n)^{-1/2})$. These steps have been also used in \cite{ALRW17}, as a data-independent reduction to the spherical instance. 

We now consider the LSH family introduced in~\cite{Cha02}. Given $n$ unit vectors $P\subset \Ss^{d-1}$, we define, for each $q \in \Ss^{d-1}$, hash function
$h(q)=\sign \langle q,v\rangle$, where $v$ is a random unit vector. Obviously, 
$\Pr[h(p)=h(q)]=1-\frac{\theta(p,q)}{\pi}$, 
where $\theta(p,q)$ denotes the angle formed by the vectors $p\ne q\in\Ss^{d-1}$.
Instead of directly using the family of~\cite{Cha02}, we employ its amplified version, obtained by concatenating $k\approx 1/r'$ functions $h(\cdot)$, each chosen independently and uniformly at random from the underlying family.
The amplified function $g(\cdot)$ shall be fully defined in the proof below.
This procedure leads to the following.

\begin{lem}\label{lem:hplanes}
Given a set of $n$ points $P\subset \RR^d$, there exists a data structure for the $(c,r)$-ANN problem under the Euclidean metric, requiring space {$O(dn)$}, time preprocessing {$O(dn)$}, and query time $O(dn^{1-\delta}+n^{0.91})$, where 
$$
\delta \geq 0.05 \cdot \Big(\frac{c-1}{{c}} \Big)^2.
$$ 
Given some query $q\in \RR^d$, the building process succeeds with constant probability.
\end{lem}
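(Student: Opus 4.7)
The plan is to set up an $(p_1,p_2,r,cr)$-sensitive family for the Euclidean metric through the chain of reductions sketched in the preamble, and then invoke Lemma~\ref{lem:DS} with that family. Concretely, I would first apply a Johnson--Lindenstrauss projection to reduce the ambient dimension to $d=O(\log n \cdot \log\log n)$ while preserving pairwise distances up to $1\pm o(1)$; this only adds $O(dn)$ time and $O(dn)$ space. Next, I would overlay a randomly shifted axis-aligned grid of cell edge $\Theta(\sqrt{d})$, and observe that any pair at Euclidean distance at most $1$ lies in the same cell with constant probability, while the diameter of each cell is $O(\sqrt{d})$. Finally I would apply the reduction of \cite{Val15} inside each cell to obtain, with constant probability, an equivalent ANN instance on the unit sphere $\Ss^{d-1}$ with near radius $r'=\Theta(1/\sqrt{\log n \log\log n})$ and far radius $cr'$.

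On the sphere I would use the amplified hyperplane family: draw $k$ independent unit vectors $v_1,\dots,v_k$ and set $g(x)=(\operatorname{sign}\langle x,v_1\rangle,\dots,\operatorname{sign}\langle x,v_k\rangle)$. Since $\Pr[\operatorname{sign}\langle v,x\rangle=\operatorname{sign}\langle v,y\rangle]=1-\theta(x,y)/\pi$, and since on the sphere $\theta(x,y)\sim \|x-y\|_2$ for small distances, the base collision probabilities are approximately $1-r'/\pi$ in the near case and $1-cr'/\pi$ in the far case. I would choose the amplification level $k=\lceil \pi/(c r')\rceil$, which is the natural scaling that makes both probabilities constants. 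A direct computation using $(1-x)^{1/x}\to e^{-1}$ then gives
\[
p_1 \;\geq\; (1-r'/\pi)^{k} \;\to\; e^{-1/c}, \qquad p_2 \;\leq\; (1-cr'/\pi)^{k} \;\to\; e^{-1},
\]
up to $o(1)$ corrections that vanish as $n\to\infty$ because $r'\to 0$.

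With these probabilities in hand, I would plug into the formula of Lemma~\ref{lem:DS}. Expanding $e^{-1/c}=e^{-1}\,e^{(c-1)/c}\geq e^{-1}\bigl(1+(c-1)/c\bigr)$ for $c>1$ yields $p_1-p_2\geq e^{-1}(c-1)/c$, while $1-p_2\leq 1-e^{-1}$. Substituting,
\[
\delta \;=\; \frac{(p_1-p_2)^2}{1-p_2}\cdot\frac{\log e}{4}
\;\geq\; \frac{e^{-2}\bigl((c-1)/c\bigr)^2}{1-e^{-1}}\cdot\frac{\log e}{4}
\;=\; \frac{\log e}{4\,e\,(e-1)}\left(\frac{c-1}{c}\right)^2,
\]
and the numerical constant exceeds $0.05$. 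The remaining term in the query time of Lemma~\ref{lem:DS} is $n^{H((1-p_1)/2)}$; since $p_1\geq e^{-1/c}\geq e^{-1}$ for $c\geq 1$, one checks $H((1-p_1)/2)<0.91$, giving the claimed $n^{0.91}$ bound. Space and preprocessing inherit the $O(dn)$ bound directly from Lemma~\ref{lem:DS}, and the constant probability of success follows by taking the intersection over the JL step, the grid step, the Valiant reduction, and the preprocessing event of Lemma~\ref{lem:DS} (each failing with small constant probability, absorbed by constant-factor repetition).

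The main obstacle is bookkeeping across the reduction chain: in particular, making sure the $o(1)$ perturbations in $r'$ and $cr'$ incurred by JL and by the spherical reduction do not eat into the small gap $(c-1)/c$ in the collision probabilities. One must choose the constant in $d=O(\log n \log\log n)$ and the grid scale carefully so that the induced multiplicative distortion is $1+o(c-1)$, thereby keeping the $c^{-1}$ factor intact; this is routine but is the one place where the numerical constant $0.05$ could be squeezed if done carelessly.
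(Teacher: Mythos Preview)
Your argument is correct and follows the same architecture as the paper: Johnson--Lindenstrauss, randomly shifted grid, Valiant's reduction to the sphere, amplified hyperplane LSH, then Lemma~\ref{lem:DS}. The only substantive difference is the amplification level: you set $k=\lceil \pi/(cr')\rceil$, which drives $(p_1,p_2)\to(e^{-1/c},e^{-1})$, whereas the paper sets $k=\lfloor \pi/\theta_r\rfloor\approx \pi/r'$, obtaining $(p_1,p_2)\to(e^{-1},e^{-c})$ and then bounding $e^{-c}\le (ce)^{-1}$. Both tunings give $\delta\ge \text{const}\cdot((c-1)/c)^2$ with a constant exceeding $0.05$; your choice has the mild advantage that $p_1\ge e^{-1}$ uniformly in $c$, so the entropy term $H((1-p_1)/2)$ is controlled without further case analysis.

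One small bookkeeping point: you write $1-p_2\le 1-e^{-1}$, but with $k=\lceil \pi/(cr')\rceil$ one actually has $p_2=(1-cr'/\pi)^k\le e^{-1}$, so the inequality goes the other way before passing to the limit. This is harmless because $p_2=e^{-1}+o(1)$ and your computed constant $\frac{\log e}{4e(e-1)}\approx 0.077$ leaves ample slack over $0.05$ to absorb the $o(1)$; just phrase it as $1-p_2=(1-e^{-1})(1+o(1))$ rather than as a one-sided inequality.
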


\begin{proof}
We exploit the reduction described above that translates the Euclidean ANN to a spherical instance of ANN with search radius $r'=\Theta((\log n\cdot \log \log n)^{-1/2})$.
The latter is handled by a hyperplane LSH scheme based on~\cite{Cha02} as detailed immediately below.

Let us denote by $F$ the aforementioned LSH family of~\cite{Cha02}. We build a new (amplified) family of functions $G_k=\{g(x)=(h_1(x),\ldots,h_k(x)) : i=1\ldots k,\, h_i\in F)\}$. Now, obviously, for any two unit vectors $p\ne q$, we have
$$
\Pr_{g\in G}[g(p)=g(q)]=\Big(1-\frac{\theta(p,q)}{\pi}\Big)^k.
$$
Hence,
$$
\|p-q\|_2\leq r' \implies 2 \sin \Big(\frac{\theta(p,q)}{2}\Big)\leq r'\implies \theta(p,q) \leq 2 \arcsin \Big(\frac{r'}{2} \Big)=\theta_r,
$$
which defines $\theta_r$. Moreover,
$$
\|p-q\|_2\geq c r'\implies  2 \sin \Big(\frac{\theta(p,q)}{2}\Big) \geq c r' \implies
\theta(p,q)\geq 2\arcsin\Big( \frac{cr'}{2}\Big).
$$
By using elementary calculus, 
it is easy to prove that
$$
2\arcsin\Big( \frac{cr'}{2}\Big) \geq 2 c\cdot\arcsin\Big( \frac{r'}{2}\Big)\implies \theta(p,q)\geq c\cdot \theta_r.
$$
Hence, for 
$k=\lfloor\pi/\theta_r \rfloor$ and since $r'=\Theta((\log n \cdot \log \log n)^{-1/2})\implies \theta_r=o(1)$, 
$$
p_1=\Pr[g(p)=g(q) \mid \|p-q\|_2\leq r]\geq \Big(1-\frac{\theta_r}{\pi} \Big)^k\geq exp(-\frac{\pi}{(\pi-\theta_r)})\geq \frac{1}{\ee^{1+o(1)}},
$$
$$
p_2=\Pr[g(p)=g(q) \mid \|p-q\|_2\geq c \cdot r]\leq \Big(1-\frac{c \cdot \theta_r}{\pi} \Big)^k\leq 
exp(-\frac{c \theta_r}{\pi}\cdot(\frac{\pi}{\theta_r}-1))\leq \frac{1}{{c} \cdot \ee^{1-o(1)}}.
$$
Now applying Lemma~\ref{lem:DS} yields 
$$
\delta \geq \frac{1}{\ee^{2+o(1)}}\cdot\Big( 1-\frac{\ee^{o(1)}}{{c}} \Big)^2\cdot \frac{1}{1-(c\cdot\ee)^{-1}} \cdot \frac{\log(\ee)}{4}
\geq 0.059\cdot  \Big(1-\frac{1}{{c}}\Big)^2,
$$
for $c\in (1,2]$.
The space required is 
$O(dn+n(d'+k))$ where $d'$ is defined in Lemma \ref{lem:DS}. 
Since $k\ll d$, and $d' \ll d$ the total space is $O(dn)$. Notice also that $H(\frac{1-p_1}{2})\leq 0.91$.
\end{proof}
The data structure of Lemma~\ref{lem:hplanes} provides slightly better query time than that of Lemma~\ref{lem:lines}, when $c$ is small enough. 
\subsection{The $\ell_1$ case}

In this section, we study the $(c,r)$-ANN problem under the ${\ell}_1$ metric. The dataset consists again of $n$ points $P\subset {\RR}^d$ and the query point is $q \in {\RR}^d$.

For this case, let us consider the following LSH family, introduced in~\cite{AI06}. In particular, a point $p$ is hashed as follows:
$$
h(p)=\Big(\left \lfloor \frac{p_1+t_{1}}{w} \right \rfloor , \left \lfloor \frac{p_2+t_{2}}{w} \right \rfloor, \ldots, \left \lfloor \frac{p_{d}+t_{d}}{w} \right \rfloor \Big) ,
$$
where $p=(p_1, p_2, \ldots, p_{d})$ is a point in $P$, $w=\alpha r$, and the $t_{i}$ are drawn uniformly at random from $[ 0,\ldots ,w)$. Buckets correspond to cells of a randomly shifted grid.

Now, in order to obtain a better lower bound, we employ an amplified hash function, defined by concatenation of $k=\alpha$ functions $h(\cdot)$ chosen uniformly at random from the above family.

\begin{lem}
Given a set of $n$ points $P\subseteq \RR^d$, there exists a data structure for the $(c,r)$-ANN problem under the $\ell_1$ metric, requiring space $O(dn)$, time preprocessing $O(dn)$, and query time $O(dn^{1-\delta}+n^{0.91})$, where 
$$
\delta \geq 0.05 \cdot \Big(\frac{c-1}{{c}} \Big)^2.
$$
Given some query point $q\in \RR^d$, the building process succeeds with constant probability.
\end{lem}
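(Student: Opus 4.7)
The plan is to parallel the proof of Lemma~\ref{lem:hplanes}: first bound $p_1$ and $p_2$ for the amplified grid LSH of this section, showing $p_1\geq 1/\ee^{1+o(1)}$ and $p_2\leq 1/(c\ee)$, and then apply Lemma~\ref{lem:DS}.

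First I would analyze the base hash $h(\cdot)$ coordinate by coordinate. Because $t_i$ is uniform in $[0,w)$, a direct one-dimensional calculation shows $\Pr[\lfloor(p_i+t_i)/w\rfloor=\lfloor(q_i+t_i)/w\rfloor]=\max(0,1-|p_i-q_i|/w)$. Independence across coordinates yields $\Pr[h(p)=h(q)]=\prod_{i=1}^d\max(0,1-x_i/w)$ with $x_i=|p_i-q_i|$. In the near case $\|p-q\|_1\leq r$, taking $w=\alpha r$ with $\alpha>1$ keeps every factor in $[0,1]$, so Weierstrass's product inequality gives $\prod_i(1-x_i/w)\geq 1-\sum_i x_i/w\geq 1-1/\alpha$. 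In the far case $\|p-q\|_1\geq cr$, applying $\max(0,1-y)\leq \ee^{-y}$ coordinate-wise yields $\prod_i\max(0,1-x_i/w)\leq \ee^{-\sum_i x_i/w}\leq \ee^{-c/\alpha}$.

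Next I would amplify by concatenating $k=\alpha$ independent copies of $h$, which gives a $(p_1,p_2,r,cr)$-sensitive family with $p_1\geq(1-1/\alpha)^\alpha$ and $p_2\leq \ee^{-c}$. Taking $\alpha=\omega(1)$ and Taylor-expanding $\alpha\ln(1-1/\alpha)=-1-1/(2\alpha)-O(1/\alpha^2)$ yields $p_1\geq \ee^{-1-o(1)}$, while $\ee^{-c}\leq 1/(c\ee)$ follows from the elementary inequality $\ee^{c-1}\geq c$ for $c\geq 1$. These are precisely the bounds attained by the hyperplane LSH family in Lemma~\ref{lem:hplanes}, so the remainder proceeds identically.

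Plugging into Lemma~\ref{lem:DS}, the gap satisfies $p_1-p_2\geq (1/\ee)(\ee^{-o(1)}-1/c)\geq (1-o(1))(c-1)/(c\ee)$, and $1-p_2\leq 1$, yielding
$$
\delta\geq \Big(\frac{c-1}{c}\Big)^2\cdot\frac{\log \ee}{4\ee^2}\cdot(1-o(1))^2\geq 0.05\Big(\frac{c-1}{c}\Big)^2
$$
for $c\in(1,2]$ and $n$ sufficiently large; since $p_1$ is close to $1/\ee$, one also has $H((1-p_1)/2)\leq 0.91$, producing the $n^{0.91}$ term in the query time. The main technical obstacle lies in the per-coordinate step: the single-coordinate collision probability depends on the individual $|p_i-q_i|$ rather than on $\|p-q\|_1$, and bridging this gap with only information about the sum is exactly where Weierstrass's inequality is essential for the lower bound and $1-y\leq \ee^{-y}$ suffices for the upper bound.
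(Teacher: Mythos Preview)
Your approach is essentially the paper's: amplify the grid LSH family by concatenating $k=\alpha$ independent copies, show $p_1\geq\ee^{-1-o(1)}$ and $p_2\leq 1/(c\ee)$, and invoke Lemma~\ref{lem:DS}. Two small differences are worth noting. First, the paper simply quotes from~\cite{AI06} that the base family is $(1-1/\alpha,\,1-c/(c+\alpha),\,1,\,c)$-sensitive, whereas you rederive (slightly sharper) bounds via Weierstrass and $1-y\leq\ee^{-y}$; this is fine and arguably cleaner. Second, the paper fixes $\alpha=k=\log n$, which you should also do so that the hashing cost fits within the stated $O(dn)$ preprocessing.

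There is, however, a genuine numerical gap in your final step. Using only $1-p_2\leq 1$ gives
\[
\delta \;\geq\; \Big(\frac{c-1}{c}\Big)^2\cdot\frac{\log\ee}{4\ee^{2}}\;\approx\;0.0488\,\Big(\frac{c-1}{c}\Big)^2,
\]
which falls short of the claimed $0.05$. The paper closes this gap by also establishing a \emph{lower} bound on $p_2$: from $(1-c/(\alpha+c))^{\alpha}\geq\ee^{-c}$ (via $\ln(1+x)\leq x$) and $\ee^{-c}\geq 1/(\ee(2c-1))$ on $c\in(1,2]$, one gets $1-p_2\leq 1-1/(\ee(2c-1))$, and dividing by this instead of by $1$ pushes the constant up to about $0.055$. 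Your coordinate-wise analysis only yields an upper bound on the far-collision probability, so you never obtain such a lower bound on $p_2$; adding that one extra inequality repairs the constant.
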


\begin{proof}
We denote by $F$ the previously introduced LSH family of~\cite{AI06}, which is 
$(1-\frac{1}{\alpha},1-\frac{c}{c+\alpha},1,c)$-sensitive. We build the amplified family of functions $G_k=\{g(x)=(h_1(x),\ldots,h_k(x)) : i=1,\ldots, k,~ h_i\in F) \}$. 
Setting $\alpha = k = \log{n}$, we have:
$$
{p_1}={\Big(1-\frac{1}{\alpha}\Big)}^k={\Big(1-\frac{1}{\log{n}}\Big)}^{\log{n}} \geq {\Big(\exp \Big(-\frac{1}{\log n-1}\Big)\Big)}^{\log{n}}\geq \frac{1}{\ee^{1+o(1)}},
$$ 
$$
{p_2}={\Big(1- \frac{c}{\alpha+{c}}\Big)}^k = {\Big(1- \frac{c}{\log n+c}\Big)}^{\log n} .
$$
Hence, 
$$
{p_2} \geq \exp(-c) \geq \frac{1}{\ee \cdot (2c-1)},
$$
and 
$$
{p_2} \leq \exp \Big(-\frac{c}{1+\frac{c}{\log n}}\Big)=  \exp \Big(-\frac{c}{1+o(1)}\Big) \leq \exp \big(-c+o(1)\big) \leq \frac{\ee^{o(1)}}{\ee c}.
$$
Therefore, for $n$ large enough, it holds that 
$$ \delta = \frac{{\big({p_1}-{p_2} \big)}^2}{\big(1-{p_2} \big)}\cdot \frac{\log \ee}{4} \geq \frac{1}{\ee^{2+o(1)}} \cdot \frac{{(1-\frac{1}{c}) }^2}{1-\frac{1}{\ee (2c-1)}} \cdot \frac{\log \ee}{4}  
\geq 0.055 \cdot (1-\frac{1}{c})^2,
$$
for $c\in(1,2]$. Notice that $H((1-p_1)/2)\leq 0.91$.
\end{proof}

\section{Implementation}\label{sec:impl}

This section discusses our C++ library, named {\tt Dolphinn}, which is available online\footnote{\url{https://github.com/gsamaras/Dolphinn}}.

The project is open source, under the BSD 2-clause license. The code has been compiled with {\tt g++ 4.9} compiler, with the O3 optimization flag enabled. It contains only 716 lines of code.
Important implementation issues are discussed here, focusing on efficiency. 

Our implementation is based on the algorithm from Subsection~\ref{SSSlines} and supports similarity search under the Euclidean metric. We denote by $F$ the LSH family introduced above (see also~\cite{DI04}) and discussed in Subsection~\ref{SSSlines}. 
The data structure to which the points are mapped is called Hypercube.


\subsubsection*{Parameters.} Our implementation provides several parameters to allow the user to fully customize the data structure and search algorithm, such as:

\begin{itemize}

	\item[$d'$] The dimension of the Hypercube to which the points are going to be mapped. The larger this parameter, the faster the query time, since it leads to a finer mapping of the points. However, this speedup comes with increased memory consumption and build time.

	\item[$Threshold$] Maximum number of points to be checked during search phase. The greater this parameter, the more accurate results will be produced, at the cost of an increasing query time.

	\item[$r$] It is provided by the user as input. The algorithm checks whether a point lies within a radius $r$ from the query point. 
	
	\item[$w$] Indexes a hash function $h$ from LSH family $F$. It specifies the distribution used by our data structure to map points to the 0/1 vertices of the Hypercube.

	\item[$\mu$, $\sigma$] Index a hash function $h$ from LSH family $F$, and specify the random vector under the Normal Distribution $\mathcal{N}(\mu,\sigma^2)$, which is multiplied by the point vector. It determines the distribution used by our data structure to map points to the 0/1 vertices of the Hypercube.
\end{itemize}


\subsubsection*{Configuration.} Despite the simple parameter set, we employ default values for all parameters (except $r$), which are shown to make {\tt Dolphinn} run efficiently and accurately for several datasets. Moreover, the whole configuration is desgined for an one-threaded application. However, it would be interesting to modify it and take advantage of the huge parallel potential of {\tt Dolphinn}, both in the preprocessing and search stage.


\subsubsection*{Hypercube.}
The Hypercube data structure contains its statistical information, so that it can efficiently hash a query on arrival, by using two hash tables, that indicate the statistical choices made upon build and the index of every associated original point. Moreover, another hashtable is constructed, which maps the vertices of the Hypercube to their assigned original points. This hierarchy provides space and time efficiency, and is very natural to code, because of the simplicity of the algorithm.

\section{Experimental results}\label{Sie}

\setlength{\intextsep}{16pt}
\newcommand{\squeeze}{\vspace{-20pt}}

This section presents our experimental results and comparisons on a number of synthetic and real datasets. 

All experiments are conducted on a processor at 3~GHz$\times 4$ with 8~GB. We compare with the state-of-the-art LSH-based {\tt FALCONN}\footnote{\url{https://falconn-lib.org/}} library, and to the brute force approach, focusing on build and search times, as well as memory consumption and accuracy.

\subsubsection*{Datasets.} We use five datasets of varying dimensionality and cardinality. To test special topologies, the first two sets, namely \emph{Klein bottle} and \emph{Sphere}, are synthetic. We generate points on a Klein bottle and on a sphere embedded in $\RR^d$, then add to each coordinate zero-mean Gaussian noise of standard deviation~0.05 and~0.1 respectively. In both cases, queries are nearly equidistant to all points, which implies high miss rates.

The other three datasets, \emph{MNIST}\footnote{\url{http://yann.lecun.com/exdb/mnist/}}, \emph{SIFT} and \emph{GIST}\footnote{\url{http://corpus-texmex.irisa.fr/}} are presented in~\cite{JeDS11}, and are very common in computer vision, image processing, and machine learning. SIFT is a 128-dimensional vector that describes a local image patch by histograms of local gradient orientations. MNIST contains vectors of 784 dimensions, that are $28\times 28$ image patches of handwritten digits. There is a set of $60$k vectors, plus an additional set of $10$k vectors that we use as queries. GIST is a 960-dimensional vector that describes globally an entire image. SIFT and GIST datasets each contain one million vectors and an additional set for queries, of cardinality $10^4$ for SIFT and 1000 for GIST. Small SIFT is also examined, with $10^4$ vectors and 100 queries.

For the synthetic datasets, we solve the one near neighbor problem with a fixed radius of $1$ and we compare to brute force, since 
{\tt FALCONN} does not provide such a method yet.
Queries are constructed like this: Uniformly randomly pick a point from the point set and, following a Normal Distribution $N(0, 1)$, add a small quantity to every coordinate. A query is chosen to lie within the fixed radius $1$ with a probability of $50\%$. For the image datasets, we find all near neighbors within a fixed radius of $1$, by modifying our algorithm in order not to stop when it founds a point, if any, that lies within the given radius, but to continue until the threshold of points to be checked is reached. Moreover, we tune the number of probes for multiprobe LSH used by {\tt FALCONN}. For fair comparison, both implementations are configured in a way that yields the same accuracy.


\subsubsection*{Preprocessing.} 
The preproccesing phase of {\tt Dolphinn} is not a heavy operation and applications that tend to build several data structures for different point sets can benefit from this. Moreover, the preprocessing time of {\tt Dolphinn} has a linear dependence in $n$ and $d$, as expected, which is shown in Tables~\ref{tab:fixndSphere} and~\ref{tab:fixndKlein}.

{\tt Dolphinn} is expected 
to be faster than {\tt FALCONN} when building the data-structure, and Table~\ref{tab:build} shows representative experiments where {\tt Dolphinn} yields an average of~6.5 times faster preprocessing time. One fact that explains this significant speedup is the major difference on memory consumption by the different implementations, as discussed below.
Note that any normalization and/or centering of the point set, which is a requirement for {\tt FALCONN}, is {\it not} taken into account when counting runtimes. 

Table~\ref{tab:build} also shows the dependence of build time on the number of hashtables for {\tt FALCONN} and dimension $d'$ for {\tt Dolphinn}. The former is a linear relationship, whereas our method has roughly exponential complexity in $d'$.
Nonetheless, {\tt Dolphinn} is significantly faster.

\subsubsection*{Search and Memory.} 
We conduct experiments on our synthetic datasets while keeping $n$ or $d$ fixed, in order to illustrate how our algorithm's complexity scales in practice. The Sphere dataset is easier than the Klein bottle, which explains the reduced accuracy, as well as the dramatic decrease of speedup (w.r.t.\ brute force), since more points are likely to lie within a fixed radius of $1$, than in the Klein bottle. In general, our algorithm seems to scale well, sublinearly in $n$ and linearly in $d$, as shown in Table~\ref{tab:fixndSphere} and Table~\ref{tab:fixndKlein}.

For {ANN search}, we introduce a threshold on the number of points that the algorithm actually checks, in other words, on the maximum allowed length of the final candidate list. While preserving a good accuracy rate (i.e. above 90\%), we observe how the search time drops proportionally to the threshold, as shown in Table~\ref{tab:maxPntsSphere} and Table~\ref{tab:maxPntsKlein}. We observe that the search time grows linearly (in worst case scenario\footnote{negative answer for the near neighbor decision problem}), as expected, while the accuracy increases as the threshold increases, as shown in Table~\ref{tab:maxPntsSIFT}. However, one should be aware that the linear dependence depends on whether the algorithm visits empty buckets or not.
In practice, the relation between the accuracy and the threshold should be determined by the programmer, by experimenting and finding a sweet spot.

Moreover, We report query times between {\tt FALCONN} and {\tt Dolphinn} on the image datasets; small SIFT, SIFT, MNIST and GIST, as well as memory consumption and building times. {\tt Dolphinn} is 21.5 times faster than {\tt FALCONN} in the preprocessing stage. Space consumption of {\tt Dolphinn} is much better in the Small SIFT experiment, while {\tt FALCONN} is a bit faster in search. In all other cases, {\tt Dolphinn} consumes 2.1 times less memory, while {\tt FALCONN} is 1.15x faster in search, as shown in Table~\ref{tab:search}.

\begin{table} \begin{center}
\begin{tabular}{ |c|c||r|r|r| } \hline
	\multirow{1}{1em}{$n$} & \multirow{1}{1em}{$d$} & \multicolumn{1}{c|}{build} & \multicolumn{1}{c|}{search ($\mu$sec)}  & \multicolumn{1}{c|}{brute f.~search (sec)} \\
	\hline\hline
	\multirow{5}{2em}{$10^5$} & 128 	& 0.053	& 62.37    &	0.006	\\
	                          		 &  256	& 0.092	& 152.3	&	0.012	\\
	                          		 &  512	& 0.168	& 257.1	&	0.025	\\
	                           		 &  800	& 0.255	& 374.1	&	0.039	\\
	                          		 & 1024	& 0.321	& 499.6	&	0.050	\\ \hline
	100		&	\multirow{5}{2em}{$512$}		& 0.0002	& 1.001	& 7.5e-05	\\
	1000		&							& 0.0016	& 4.924	& 0.0004	\\
	10000	&                           				& 0.0169	& 47.72	& 0.0049	\\
	100000	&                           				& 0.1683	& 477.0	& 0.0499	\\
	1000000	&                           				& 1.6800	& 2529	& 0.2492	 \\ \hline
\end{tabular}
\caption{Sphere build and search for varying $n,d$, where the other parameter is fixed. This experiment confirms that the build time, as well as the search time, scale gracefully along with $n,d$, while accuracy remained maximum and the speedup over brute force was 80 times on average.\label{tab:fixndSphere}}
\end{center}
\squeeze
\end{table}

\begin{table} \begin{center}
\begin{tabular}{ |c|c||r|r|r| } \hline
	\multirow{1}{1em}{$n$} & \multirow{1}{1em}{$d$} & \multicolumn{1}{c|}{build} & \multicolumn{1}{c|}{search (sec)}  & \multicolumn{1}{c|}{brute f. search (sec)} \\
	\hline\hline
	\multirow{5}{2em}{$10^5$} & 128 	& 0.053	& 0.0009   &	0.0061	\\
	                          		 &  256	& 0.091	& 0.0029	&	0.0147	\\
	                          		 &  512	& 0.168	& 0.0031	&	0.0254	\\
	                           		 &  800	& 0.259	& 0.0056	&	0.0425	\\
	                          		 & 1024	& 0.321	& 0.0061	&	0.0513	\\ \hline
	100		&	\multirow{5}{2em}{$512$}		& 0.0003	& 1e-05	& 7e-05	\\
	1000		&							& 0.0016	& 4e-05	& 0.0004	\\
	10000	&                           				& 0.0169	& 0.0004	& 0.0049	\\
	100000	&                           				& 0.1679	& 0.0051	& 0.0501	\\
	1000000	&                           				& 1.6816	& 0.0252	& 0.2497	 \\ \hline
\end{tabular}
\caption{Klein bottle build and search for varying $n,d$, where the other parameter is fixed. This experiment confirms that the build and search times scale nicely with $n,d$, while the average accuracy was 98.8\%. The speedup over brute force was 8.5x on average.\label{tab:fixndKlein}}
\end{center}
\squeeze
\end{table}


\begin{table}\begin{center}
\begin{tabular}{|c|rrrrrrrr| } \hline
Threshold   &1 &	100 &	300 &	500 &	700 &	1000 &	5000 &	10000 \\ \hline\hline
		search ($\mu$s)     & 0.64 &	24.5 &	73.1 &	110 &	170 &	260 &	1348 &	3207 \\ \hline
\end{tabular}
\caption{Sphere ($n = 10^6, d = 512$) was the point set; it is shown how the number of points that must be actually checked increases as a function of {Threshold}. \label{tab:maxPntsSphere}}
\end{center}
\squeeze
\end{table}


\begin{table}\begin{center}
\begin{tabular}{|c|rrrrr|} \hline
	Threshold	     &10 		& 100 	& 1000  &	5000 & 10000  \\ \hline\hline
		search ($\mu$s)    & 6.688 	& 47.78	& 615.2 &	2542 & 7754 \\ \hline
\end{tabular}
\caption{Klein bottle ($n = 10^5, d = 512$) was the point set; it is shown how the number of points that must be actually checked increases as a function of {Threshold}. \label{tab:maxPntsKlein}}
\end{center}
\squeeze
\end{table}
\begin{table}\begin{center}
\begin{tabular}{|c|rrrrrrr| } \hline
	Threshold	&1		&10 		& 100 	& 1000   	&	10000	& 100000	& 1000000 \\ \hline\hline
		search		& 0.0002	& 0.0013 	& 0.0139	& 0.1884	&	2.707	& 16.73  	& 17.75 \\
		accuracy\%       &0		& 0	 	& 0		& 0.26 	&	7.19 		& 80.49  	& 100 \\ \hline
\end{tabular}
\caption{SIFT was used as the point set, while displaying how the number of points to actually check affects search time (msec) and accuracy. \label{tab:maxPntsSIFT}}
\end{center}
\squeeze
\end{table}


\begin{table}\begin{center}
\begin{tabular}{|c|cccc|cccc|cccc|} \hline
  & \multicolumn{4}{c|}{SIFT $n = 10^4, d = 128$} & \multicolumn{4}{c|}{SIFT $n = 10^6, d = 128$} & \multicolumn{4}{c|}{GIST $n = 10^6, d = 960$} \\ \hline
$\#$hashtables or $d'$ &  2  	    &  4      &  8      &  16    & 2      &  4     &  8     &  16   & 2      &  4     &  8      &  16  \\ \hline\hline
{\tt FALCONN} 	&  0.11 &  0.12 &  0.17 &  0.32 & 3.58 & 3.64 & 7.27 & 14.5 & 18.2 & 18.9 & 39.3  & 75.7  \\
{\tt Dolphinn}        	&  0.01 &  0.02 &  0.02 &  0.05 & 0.52 & 0.52 & 1.49 & 3.33 & 2.03 &  4.01 &  7.98 &  15.9 \\ \hline
\end{tabular}
\caption{Build time (s) for 3 representative datasets, where it is evident that {\tt Dolphinn} has~6.5 times faster construction time than {\tt FALCONN}; $d'$ is the dimension of Hypercube, while $\#$hashtables is a critical parameter used by {\tt FALCONN} for multi-probe LSH.\label{tab:build}}
\end{center}
\squeeze
\end{table}

\hspace*{-1cm}\begin{table}\begin{center}
\begin{tabular}{|l|l|r|l|r|l|r|r|r|} \hline
       & \multicolumn{2}{c|}{Small SIFT} & \multicolumn{2}{c|}{SIFT} & \multicolumn{2}{c|}{MNIST} &\multicolumn{2}{c|}{GIST} \\ \hline
   &  {\tt FALCONN}     &  {\tt Dolphinn} 	& {\tt FALCONN} 	&  {\tt Dolphinn}  &  {\tt FALCONN}     	&  {\tt Dolphinn}  	&  {\tt FALCONN}     	&  {\tt Dolphinn}       \\ \hline\hline
Build   	&  0.241    		&  0.010   &  12.71    		&  0.571   	& 3.426     		&  0.158	&  55.69     		&  3.023      \\
Memory  	&  0.064     		&  0.008    &  1.167    		&  0.526   &  0.429     		&  0.219	&  7.666   			&  3.776      \\
Search	&  7e-05     		&  9e-05   &  0.007     		&  0.009   &  0.0005     		&  0.0005		&  0.123     		&  0.128         \\ \hline
\end{tabular}
\caption{Memory usage (gb), and Build and Search times (s) for the image datasets. {\tt Dolphinn} is times faster than {\tt FALCONN} in the preprocessing stage. Space consumption of {\tt Dolphinn} is much better in the Small SIFT experiment, while {\tt FALCONN} is a bit faster in search. In all other cases, {\tt Dolphinn} consumes 2.1 times less memory, while {\tt FALCONN} is 1.15x faster in search.
\label{tab:search}}
\end{center}
\end{table}


\section{Conclusion}

We have designed a conceptually simple method for a fast and compact
approach for Near Neighbor queries, and have tested it experimentally. This offers a competitive approach to Approximate Nearest Neighbor search.

Our method is optimal in space, with sublinear query time for any constant approximation factor $c>1$. 
The algorithm randomly projects points to the Hamming hypercube. The query algorithm simply projects the query point, then examines points which are assigned to the same or nearby vertices on the hypercube. 
We have analyze the query time for the Euclidean and Manhattan metrics. 

We have focused only on data-independent methods for ANN. However, data-dependent methods are known to perform better in theory. Hence, designing a practical data-dependent variant of our method will be a very interesting next step. 

\bibliographystyle{alpha}
\bibliography{jlann}

\end{document}